\renewcommand{\algocf@captiontext}[2]{\quad #1\algocf@typo. \AlCapFnt{}#2} 
\def\@algocf@capt@plain{top}
\renewcommand{\algocf@makecaption}[2]{%
  \addtolength{\hsize}{\algomargin}%
  \sbox\@tempboxa{\algocf@captiontext{#1}{#2}}%
  \ifdim\wd\@tempboxa >\hsize
    \hskip .5\algomargin%
    \parbox[t]{\hsize}{\algocf@captiontext{#1}{#2}}
  \else%
    \global\@minipagefalse%
    \hbox to\hsize{\box\@tempboxa}
  \fi%
  \addtolength{\hsize}{-\algomargin}%
}
\def\T{{ \mathrm{\scriptscriptstyle T} }}
\DeclareMathOperator*{\argmin}{arg\,min}
\newtheorem{definition}{Definition}
\newtheorem{corollary}{Corollary}
\newtheorem{proposition}{Proposition}
\begin{document}

\title{On the marginal likelihood and cross-validation}

\author{E. Fong \& C. C. Holmes \\ ~ \\ Department of Statistics, University of Oxford, OX1 3LB  \\ edwin.fong@stats.ox.ac.uk \& cholmes@stats.ox.ac.uk}

\maketitle
\begin{abstract}
In Bayesian statistics, the marginal likelihood, also known as the  evidence, is used to evaluate model fit as it quantifies the joint probability of the data under the prior. In contrast, non-Bayesian models are typically compared using cross-validation on held-out  data, either through $k$-fold partitioning or leave-$p$-out subsampling. We show that the marginal likelihood is formally equivalent to exhaustive leave-$p$-out cross-validation averaged over all  values of $p$ and all held-out test sets when using the log posterior predictive probability as the scoring rule. Moreover, the log posterior predictive is the only coherent scoring rule under data exchangeability. This offers new insight into the marginal likelihood and cross-validation  and highlights the potential sensitivity of the marginal likelihood to the choice of the prior. We suggest an alternative approach using cumulative cross-validation following a preparatory training phase. Our work has connections to prequential analysis and intrinsic Bayes factors but is motivated through a different course. 
\end{abstract}

\section{Introduction}

Probabilistic model evaluation and selection is an important task in statistics and machine learning, particularly when multiple models are under initial consideration.  In the non-Bayesian literature, models are typically compared using out-of-sample performance criteria such as cross-validation
\citep{Geisser1979,Shao1993,Vehtari2002}, or predictive information \citep{Watanabe2010}. Computing the leave-$p$-out cross-validation score requires $n$-choose-$p$ test set evaluations  for $n$ data points, which in most cases is computationally unviable and hence approximations such as $k$-fold cross-validation are often used instead \citep{Geisser1975}. A survey is provided by \citet{Arlot2010}, and a Bayesian perspective on cross-validation by \citet{Vehtari2012, Gelman2014}.

In Bayesian statistics, the marginal likelihood or model evidence is the natural measure of model fit. For a model $\mathcal{M}$ with likelihood function or sampling distribution $\left\{f_{\theta}(y): \theta \in \Theta \right\}$ parameterized by $\theta$, a prior $\pi(\theta)$, and observations $y_{1:n} \in \mathcal{Y}^n$, the marginal likelihood or the prior predictive is defined as
\begin{equation}
\label{eq:bml}
p_{\mathcal{M}}( y_{1:n})  = \int f_{\theta}(y_{1:n} ) \, d\pi(\theta) \, .
\end{equation}
The marginal likelihood can be used to calculate the posterior probability of the model given the data, $p( {\cal{M}} \mid y_{1:n} ) \propto p_{\mathcal{M}}( y_{1:n})  \, p({\cal{M}})$, as it is the probability of the data being generated under the prior when the model is correctly specified \cite[Chapter~7]{Robert2007}. The ratio of marginal likelihoods between models is known as the Bayes factor that quantifies the prior to posterior odds on observing the data. The marginal likelihood can be difficult to compute if the likelihood is peaked with respect to the prior, although Monte Carlo solutions exist; see \citet{Robert2009} for a survey.  Under vague priors, the marginal likelihood may also be highly sensitive to the prior dispersion even if the posterior is not; a well known example is Lindley's paradox \citep{Lindley1957,OHagan2004,Robert2014}. As a result, its approximations such as the Bayesian information criterion \citep{Schwarz1978} or the deviance information criterion \citep{Spiegelhalter2002} are widely used, see also \citet{Gelman2014}.

For our work, it is useful to note from the property of probability distributions that the log marginal likelihood can be written as the sum of log conditionals, 
\begin{equation}\label{eq:p_fac}
\log p_{\mathcal{M}}( y_{1:n}) = \sum_{i=1}^n \log p_{\mathcal{M}}(y_i \mid y_{1:i-1}) 
\end{equation}
where $ p_{\mathcal{M}}(y_i \mid y_{1:i-1})  = \int f_{\theta}(y_i) \, d \pi(\theta \mid y_{1:i-1}) $ is the posterior predictive for $i>1$,  ${p_{\cal{M}}(y_1 \mid y_{1:0})} \\ = \int f_{\theta}(y_{1} ) \, d\pi(\theta) \, $,
and this representation is true for any permutation of the data indices. 

While Bayesian inference formally assumes that the model space captures the truth, in the model misspecified or so called $M$-open scenario \cite[Chapter~6]{Bernardo2009} the log marginal likelihood can be simply interpreted as a predictive sequential, or prequential \citep{Dawid1984}, scoring rule of the form $S(y_{1:n}) = \sum_i s(y_i \mid y_{1:i-1}) $ with score function $ s(y_i \mid y_{1:i-1})  = \log {p_{\cal{M}}(y_i \mid y_{1:i-1})}$. This interpretation of the log marginal likelihood as a predictive score \cite[][Chapter~6]{Kass1995,Gneiting2007,Bernardo2009} has resulted in alternative scoring functions for Bayesian model selection \citep{Dawid2014,Dawid2015,Watson2016,Shao2019}, and provides insight into the relationship between the marginal likelihood and posterior predictive methods \citep{Vehtari2012}. \citet{Key1999} considered cross-validation from an $M$-open perspective and introduced a mixture utility for model selection that trades off fidelity to data with predictive power.

\section{Uniqueness of the marginal likelihood under coherent scoring}
 
To begin, we prove that under an assumption of data exchangeability, the log posterior predictive is the only prequential scoring rule that guarantees coherent model evaluation. The coherence property under exchangeability, where the indices of the data points carry no information,  refers to the principle that identical models on seeing the same data should be scored equally irrespective of data ordering.

In demonstrating the uniqueness of the log posterior predictive, it is useful to introduce the notion of a general Bayesian model \citep{Bissiri2016}, which is a framework for Bayesian updating without the requirement of a true model. Define a parameter of interest by
\begin{equation} \label{eq:parameter}
\theta_0 = \argmin_\theta \int l(\theta, y) dF_0(y)
\end{equation}
where $F_0(y)$ is the unknown true sampling distribution giving rise to the data, and  $l : \Theta \times \mathcal{Y} \rightarrow [0,\infty)$ is a loss function linking  an observation $y$ to the parameter $\theta$. \citet{Bissiri2016} argue that after observing $y_{1:n}$, a coherent update of beliefs about $\theta_0$ from a prior $\pi_G(\theta)$  to the posterior $\pi_G(\theta \mid y_{1:n})$ exists and must take on the form 
\begin{equation}\label{eq:genBayespost}
\pi_G(\theta \mid y_{1:n}) \propto \exp\left\{-w l(\theta, y_{1:n}) \right\} \pi_G(\theta)
\end{equation}
where $l(\theta, y_{1:n})= \sum_i l(\theta,y_i)$ is an additive loss function and $w>0$ is a loss scale parameter; see \citet{Holmes2017,Lyddon2019} on the selection of $w$. For $w= 1$ and $l(\theta,y) = -\log f_{\theta}(y)$, we obtain traditional Bayesian updating without assuming the model $f_{\theta}(y)$ is true for some value of $\theta$. From (\ref{eq:parameter}), $M$-open Bayesian inference is simply targeting the value of $\theta$ that minimizes the Kullback-Leibler divergence between $d F_0(y)$ and $f_{\theta}(y)$. The form (\ref{eq:genBayespost}) is uniquely implied by the assumptions in Theorem 1 of \citet{Bissiri2016}, and we now focus on the coherence property of the update rule. An update function $\psi\{l(\theta, y), \pi_G(\theta)\} = \pi_G(\theta \mid y)$ is coherent if, for some inputs $y_{1:2}$, it satisfies 
\begin{equation*} 
\psi[l(\theta, y_2),\psi \{l(\theta, y_1), \pi_G(\theta)\}] =\psi\{l(\theta, y_1)+l(\theta, y_2), \pi_G(\theta)\}.
\end{equation*}
This coherence condition is natural under an assumption of exchangeability as we expect posterior inferences about $\theta_0$ to be unchanged whether we observe $y_{1:2}$ in any order or all at once, as it is in traditional Bayesian updating.

We now extend this coherence condition to general Bayesian model choice, where the goal is to evaluate the fit of the observed data under the general Bayesian model class $\mathcal{M}_G = \{l(\theta,y):{\theta \in \Theta\}}$ with a prior $\pi_G(\theta)$. We treat $w$ as a parameter outside of the model specification, as there are principled methods to select it from the model, prior and data.  We define the log posterior predictive score as
\begin{equation*}
s_G (\tilde{y} \mid y_{1:n}) = \log \int g\{ l(\theta,\tilde{y})\} d\pi_G(\theta \mid y_{1:n})
\end{equation*}
where $g: [0,\infty) \to [0,\infty)$ is a continuous monotonically decreasing scoring function that transforms $l(\theta,y)$ into a predictive score for a test point $\tilde{y}$. We define the cumulative prequential log score as
 \begin{equation*}
S_G(y_{1:n}) = \sum_{i=1}^n s_G (y_i\mid y_{1:i-1}) 
\end{equation*}
where $s_G ( y_1 \mid y_{1:0})= \log \int g\{ l(\theta,y_1)\} d\pi_G(\theta ) $. The cumulative prequential log score sums the log posterior predictive score of each consecutive data point in a prequential manner, where a large score indicates that the model is predicting well. An intuitive choice for the scoring function might be the negative loss  $g(l) = -l$, but we will see that this violates coherency, as defined below.
\begin{definition}
The model scoring function $g(l)$ is coherent if it satisfies 
\begin{equation} \label{eq:coherentscore}
\sum_{i=1}^n s_G (y_i\mid y_{1:i-1})  = \log\int g\{l(\theta,y_{1:n})\} d\pi_G(\theta)
\end{equation}
for all $\Theta$, $\pi(\theta)$ and $n>0$, such that $S_G(y_{1:n})$ is invariant to the ordering or partitioning of the observations.
\end{definition}
We now present our main result on the uniqueness of the choice of $g$.

\begin{proposition}\label{prop1}
If the model scoring function  $g: [0,\infty) \to [0,\infty)$ is continuous, monotonically decreasing and coherent, then the unique choice of scoring rule $g(l)$  is
\begin{equation*}
g(l) = \exp(-wl)
\end{equation*}
where $w$ is the loss-scale in the general Bayesian posterior.
\end{proposition}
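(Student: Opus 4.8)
The plan is to squeeze the functional form of $g$ out of the coherence identity~(\ref{eq:coherentscore}) in the smallest non-trivial case, $n=2$, and then use the same identity again to pin down the constant. Writing out the one-step general Bayesian update $\pi_G(\theta\mid y_1)\propto e^{-w l(\theta,y_1)}\pi_G(\theta)$, the left-hand side of~(\ref{eq:coherentscore}) with $n=2$ equals
\[
\log\!\int g\{l(\theta,y_1)\}\,d\pi_G(\theta)\;+\;\log\frac{\int g\{l(\theta,y_2)\}\,e^{-w l(\theta,y_1)}\,d\pi_G(\theta)}{\int e^{-w l(\theta,y_1)}\,d\pi_G(\theta)},
\]
and coherence demands that this equal $\log\int g\{l(\theta,y_1)+l(\theta,y_2)\}\,d\pi_G(\theta)$ for every choice of $\Theta$, prior $\pi_G$, loss $l$, and data $y_1,y_2$.

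First I would specialise to a two-point parameter space $\Theta=\{1,2\}$ with prior masses $(p,1-p)$ and choose $l$ and $y_1,y_2$ so that $\bigl(l(1,y_1),l(2,y_1)\bigr)=(a_1,a_2)$ and $\bigl(l(1,y_2),l(2,y_2)\bigr)=(b_1,b_2)$ are arbitrary prescribed nonnegative numbers; this is legitimate because coherence must hold for every model in the class $\mathcal{M}_G$, hence for every loss, not merely for every $\Theta$ and $\pi_G$. Clearing the denominator, (\ref{eq:coherentscore}) becomes the identity
\[
\bigl[p\,g(a_1)+(1-p)g(a_2)\bigr]\bigl[p\,e^{-w a_1}g(b_1)+(1-p)e^{-w a_2}g(b_2)\bigr]=\bigl[p\,e^{-w a_1}+(1-p)e^{-w a_2}\bigr]\bigl[p\,g(a_1+b_1)+(1-p)g(a_2+b_2)\bigr],
\]
which must hold for all $p\in[0,1]$ and all $a_1,a_2,b_1,b_2\ge 0$. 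Taking $b_1=b_2=b$ collapses the bracket containing the exponentials on the right and leaves $\bigl[p\,g(a_1)+(1-p)g(a_2)\bigr]g(b)=p\,g(a_1+b)+(1-p)g(a_2+b)$; since both sides are affine in $p$, evaluating at $p=0$ and $p=1$ yields the multiplicative Cauchy equation $g(x)g(y)=g(x+y)$ on $[0,\infty)$. Continuity together with monotone decrease then gives $g(l)=\exp(-w'l)$ for some $w'>0$, the constant solution $g\equiv1$ and the trivial solution $g\equiv0$ being excluded by monotonicity.

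To identify $w'$ with the loss scale $w$, substitute $g(l)=e^{-w'l}$ back into the cleared identity and expand: the $p^2$ and $(1-p)^2$ terms cancel and one is left with the residual $p(1-p)\bigl(e^{-w a_1-w' a_2}-e^{-w' a_1-w a_2}\bigr)\bigl(e^{-w' b_1}-e^{-w' b_2}\bigr)$. Choosing $p\in(0,1)$, $b_1\neq b_2$ and $a_1\neq a_2$ makes the last two factors nonzero, so coherence requires $e^{-w a_1-w' a_2}=e^{-w' a_1-w a_2}$, i.e.\ $(w'-w)(a_1-a_2)=0$ for all $a_1,a_2\ge 0$, which forces $w'=w$. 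Equivalently, writing $S_G(y_{1:n})$ as the sum over $i$ of $\log\int e^{-w' l(\theta,y_i)-w l(\theta,y_{1:i-1})}\,d\pi_G(\theta)-\log\int e^{-w l(\theta,y_{1:i-1})}\,d\pi_G(\theta)$, the telescoping that collapses the sum to $\log\int e^{-w' l(\theta,y_{1:n})}\,d\pi_G(\theta)$ succeeds exactly when $w'=w$. Combining the two stages gives $g(l)=\exp(-wl)$.

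I expect the last stage to be the main obstacle. Deriving the Cauchy equation is a short argument that ends in a classical uniqueness result, but ruling out $w'\neq w$ is not a consequence of any such theorem: one must exhibit an explicit finite model on which coherence genuinely fails, and verify that the displayed residual is truly the only surviving term — a brief but slightly delicate expansion. A secondary point to handle carefully is the justification that the loss values $l(\theta,y_i)$ may be prescribed freely, and the degenerate cases ($g$ constant or identically zero) that monotonicity must be invoked to discard.
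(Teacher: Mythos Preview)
Your proposal is correct and follows essentially the same route as the paper: specialise to a two-point parameter space with $n=2$, extract the multiplicative Cauchy equation (the paper does this by setting $p=1$ directly rather than $b_1=b_2$, but the effect is identical), then substitute $g(l)=e^{-\lambda l}$ back into the two-point identity to force $\lambda=w$. The one piece you should be sure to include explicitly is the sufficiency check---that $g(l)=e^{-wl}$ satisfies (\ref{eq:coherentscore}) for all $n$, $\Theta$ and $\pi_G$---which the paper writes out as the telescoping product $\prod_i \int e^{-w l(\theta,y_{1:i})}d\pi_G(\theta)\big/\int e^{-w l(\theta,y_{1:i-1})}d\pi_G(\theta)$.
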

\begin{proof}
The proof is given in the Supplementary Material.
\end{proof}
This holds irrespective of whether the model is true or not. More importantly for us is the corollary below.

\begin{corollary}
The marginal likelihood is the unique coherent marginal score for Bayesian inference.
\end{corollary}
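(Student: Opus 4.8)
The plan is to obtain the corollary as the specialisation of Proposition~\ref{prop1} to the traditional Bayesian model. Recall from the discussion following (\ref{eq:genBayespost}) that ordinary Bayesian updating is the instance of the general Bayesian framework obtained by taking the loss scale $w=1$ and the self-information loss $l(\theta,y) = -\log f_\theta(y)$: then $\pi_G(\theta\mid y_{1:n}) = \pi(\theta\mid y_{1:n})$ is the usual posterior, $l(\theta,y_{1:n}) = \sum_i l(\theta,y_i) = -\log f_\theta(y_{1:n})$, and the target $\theta_0$ in (\ref{eq:parameter}) is the Kullback--Leibler projection of $F_0$ onto $\mathcal{M}$. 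The model-scoring machinery of Section~2 is unchanged under this identification, so Proposition~\ref{prop1} applies and the unique continuous, monotonically decreasing, coherent scoring function is $g(l) = \exp(-wl)$, which at $w=1$ is $g(l) = \exp(-l)$.

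The first step is then pure bookkeeping: substituting $g(l)=\exp(-l)$ and $l=-\log f_\theta$ gives $g\{l(\theta,y)\} = f_\theta(y)$, so the log posterior predictive score collapses to $s_G(\tilde{y}\mid y_{1:n}) = \log\int f_\theta(\tilde{y})\,d\pi(\theta\mid y_{1:n}) = \log p_{\mathcal{M}}(\tilde{y}\mid y_{1:n})$, and hence $S_G(y_{1:n}) = \sum_{i=1}^n \log p_{\mathcal{M}}(y_i\mid y_{1:i-1})$, which by the chain-rule identity (\ref{eq:p_fac}) equals $\log p_{\mathcal{M}}(y_{1:n})$. The right-hand side of the coherence condition (\ref{eq:coherentscore}) likewise becomes $\log\int f_\theta(y_{1:n})\,d\pi(\theta) = \log p_{\mathcal{M}}(y_{1:n})$ by the definition (\ref{eq:bml}), so the log marginal likelihood is a coherent marginal score, coherence being nothing more than the elementary factorisation (\ref{eq:p_fac}), valid for every $\Theta$, $\pi(\theta)$, $n$, and permutation of the indices. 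Uniqueness I would argue contrapositively: a marginal score for the Bayesian model is any functional $\log\int g\{l(\theta,y_{1:n})\}\,d\pi(\theta)$ built from a continuous, monotonically decreasing $g:[0,\infty)\to[0,\infty)$, and by Proposition~\ref{prop1} every such $g$ other than $\exp(-wl)$ violates (\ref{eq:coherentscore}); hence no alternative transformation of $f_\theta$ yields a cumulative prequential score invariant to reordering or partitioning, and $\log p_{\mathcal{M}}(y_{1:n})$ is the only coherent choice.

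I do not expect a substantive obstacle here, since all the force of the statement is already carried by Proposition~\ref{prop1}; the only point needing care is that the self-information loss $l(\theta,y)=-\log f_\theta(y)$ must map into $[0,\infty)$ as the general Bayesian framework assumes. This is automatic for probability mass functions, while for densities exceeding one I would dispose of it by noting that the general Bayesian posterior (\ref{eq:genBayespost}) is invariant to an additive shift of $l$, so one may work in the dominated regime $\sup_{\theta,y} f_\theta(y)\le 1$ without loss for the purpose of identifying the scoring rule. What then remains is simply to recognise the marginal likelihood inside the formula $g(l)=\exp(-wl)$ with $l=-\log f_\theta$ and $w=1$.
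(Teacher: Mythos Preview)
Your proposal is correct and follows exactly the same route as the paper: specialise Proposition~\ref{prop1} to $w=1$ and $l(\theta,y)=-\log f_\theta(y)$, so that $g\{l(\theta,y)\}=f_\theta(y)$ and the coherent marginal score collapses to $\log p_{\mathcal{M}}(y_{1:n})$. The paper's own proof is a single sentence to this effect; your additional bookkeeping and the remark about shifting $l$ to ensure $l\ge 0$ are sound elaborations but not required by the paper.
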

\begin{proof}
Let $w=1$ and $l(\theta,y) = -\log f_{\theta}(y)$, and hence $g\{ l(\theta,y)\} = f_{\theta}(y)$. 
\end{proof}
The marginal likelihood arises naturally as the unique prequential scoring rule under coherent belief updating in the Bayesian framework. The coherence of the marginal likelihood implies an invariance to the permutation of the observations $y_{1:n}$ under exchangeability, including independent and identically distributed data, a property that is not shared by other prequential scoring rules, such as \citet{Dawid2014, Grunwald2017, Shao2019}.

\section{The marginal likelihood and cross-validation}

\subsection{Equivalence of the marginal likelihood and cumulative \newline  cross-validation}
The leave-$p$-out cross-validation score is defined as 
\begin{equation} \label{eq:SCV}
S_{CV} (y_{1:n} ; p)  =   \frac{1}{{n \choose p}}  \sum_{t=1}^{{n \choose p }} \frac{1}{p}  \sum_{j=1}^{p} s\left(\tilde{y}_{j}^{(t)} \mid y^{(t)}_{1:n-p}\right) 
\end{equation}
 where $\tilde{y}^{(t)}_{1:p}$ denotes the $t$th of $n$-choose-$p$ possible held-out test sets,  with $y_{1:n-p}^{(t)}$ the corresponding training set, such that $y_{1:n} = \left\{\tilde{y}^{(t)}, y^{(t)}\right\}$, and $S_{CV}$ records the average predictive score per datum. Although leave-one-out cross-validation is a popular choice, it was shown in \citet{Shao1993} that it is asymptotically inconsistent for a linear model selection problem, and requires $\left(p/n \right) \to 1$ as $n \to \infty$ for consistency. We will not go into further detail here but instead refer the reader to \cite{Arlot2010}. Selecting a larger $p$ has the interpretation of penalizing complexity \citep{Vehtari2012}, as complex models will tend to over-fit to a small training set. However, the number of test set evaluations grows rapidly with $p$ and hence $k$-fold cross-validation is often adopted for computational convenience. 
 
From a Bayesian perspective it is natural to consider the log posterior predictive as the scoring function, $s(\tilde{y} \mid y) = \log \int f_\theta(\tilde{y}) d\pi(\theta \mid  y)$, particularly as we have now shown that it is the only coherent scoring mechanism, which leads us to the following result.

\begin{proposition}\label{prop2}
The Bayesian marginal likelihood is equivalent to the cumulative leave-$p$-out cross-validation score using the log posterior predictive as the scoring rule, such that 
\begin{equation} \label{eq:margcv}
 \log p_{\cal{M}}(y_{1:n}) = \sum_{p=1}^{n}  S_{CV} (y_{1:n} ; p) 
\end{equation}
with  $s(\tilde{y}_j \mid y_{1:n-p}) = \log p_{\cal{M}}(\tilde{y}_j \mid y_{1:n-p}) = \log \int f_\theta(\tilde{y}_j) \, d\pi(\theta \mid  y_{1:n-p})$. 
\end{proposition}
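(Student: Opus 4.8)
The plan is to start from the chain-rule factorization (\ref{eq:p_fac}) of the log marginal likelihood and exploit the fact, noted immediately after (\ref{eq:p_fac}), that it holds for every permutation of the data indices. Since the left-hand side $\log p_{\mathcal{M}}(y_{1:n})$ does not depend on the ordering, averaging (\ref{eq:p_fac}) over all $n!$ permutations $\sigma$ of $\{1,\dots,n\}$ leaves it unchanged, giving
\begin{equation*}
\log p_{\mathcal{M}}(y_{1:n}) = \frac{1}{n!}\sum_{\sigma} \sum_{i=1}^n \log p_{\mathcal{M}}\big(y_{\sigma(i)} \mid y_{\sigma(1)}, \dots, y_{\sigma(i-1)}\big).
\end{equation*}
This is an identity about $p_{\mathcal{M}}$ alone, so it holds whether or not the model is correctly specified.

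Next I would regroup the right-hand side by collecting all occurrences of a given conditional log predictive $\log p_{\mathcal{M}}(y_j \mid y_A)$, indexed by a subset $A \subseteq \{1,\dots,n\}$ and a point $j \notin A$. Such a term arises from position $i = |A|+1$ of the permutation precisely when $\{\sigma(1),\dots,\sigma(i-1)\} = A$, in any of $|A|!$ orders, and $\sigma(i) = j$, with the remaining $n-|A|-1$ indices in any order; there are $|A|!\,(n-|A|-1)!$ such permutations. Writing $m = |A|$, the weight attached to each term is therefore $m!\,(n-m-1)!/n! = 1/\{\binom{n}{m}(n-m)\}$, so
\begin{equation*}
\log p_{\mathcal{M}}(y_{1:n}) = \sum_{m=0}^{n-1} \frac{1}{\binom{n}{m}(n-m)} \sum_{|A|=m}\ \sum_{j \notin A} \log p_{\mathcal{M}}(y_j \mid y_A).
\end{equation*}

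Finally I would change variables to $p = n - m$ and identify the combinatorial data with the leave-$p$-out bookkeeping in (\ref{eq:SCV}). Using $\binom{n}{m} = \binom{n}{n-p} = \binom{n}{p}$ and the complementation bijection between training sets $A$ of size $n-p$ and test sets $\{\tilde{y}_1^{(t)},\dots,\tilde{y}_p^{(t)}\}$ of size $p$, the double sum $\sum_{|A|=m}\sum_{j\notin A}$ becomes $\sum_{t=1}^{\binom{n}{p}}\sum_{j=1}^{p}$ with $A = y_{1:n-p}^{(t)}$, and the weight becomes $1/\{\binom{n}{p}\,p\}$. Substituting the log posterior predictive $s(\tilde{y}_j^{(t)} \mid y_{1:n-p}^{(t)}) = \log p_{\mathcal{M}}(\tilde{y}_j^{(t)} \mid y_{1:n-p}^{(t)})$ recovers $\sum_{p=1}^n S_{CV}(y_{1:n};p)$ exactly; the $m=0$, equivalently $p=n$, term corresponds to an empty training set and matches the prior-predictive convention $s(y_1 \mid y_{1:0})$ already adopted in (\ref{eq:p_fac}).

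The main obstacle is the counting step. One must check carefully that the permutation average, which at first sight mixes conditionals on \emph{ordered} prefixes, collapses cleanly onto conditionals on \emph{unordered} subsets with the correct multiplicities, and that those multiplicities are precisely the $1/\binom{n}{p}$ and $1/p$ normalizers appearing in (\ref{eq:SCV}) --- no more and no less. Once the weight $m!\,(n-m-1)!/n!$ is seen to equal $1/\{\binom{n}{p}\,p\}$ under $p = n-m$, everything else is relabeling.
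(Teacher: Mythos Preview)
Your proposal is correct and follows essentially the same approach as the paper's primary proof: average the chain-rule factorization over all $n!$ permutations, then regroup by the size of the conditioning set and count multiplicities to recover the $1/\{\binom{n}{p}\,p\}$ normalizers in $S_{CV}$. The paper packages the same bookkeeping via an $(n!\times n)$ matrix $Z$ whose column sums are identified with $S_{CV}(y_{1:n};n-i+1)$, while you carry out the counting directly via $m!\,(n-m-1)!/n!$; these are cosmetic differences only.
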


\begin{proof} 
This follows from the invariance of the marginal likelihood under arbitrary permutation of the sequence $y_{1:n}$ in (\ref{eq:p_fac}). We provide a proof and an alternative proof by induction in the Supplementary Material.
\end{proof}

The Bayesian marginal likelihood is simply $n$ times the average leave-$p$-out cross-validation score, $n \times (1/n) \sum_{p=1}^{n}  S_{CV} (y_{1:n} ; p)$,  where the scaling by $n$ is due to  (\ref{eq:SCV}) being a per datum score. Bayesian models are  evaluated through out-of-sample predictions on all  $(2^n-1)$ possible held-out test sets whereas cross-validation with fixed $p$ only captures a snapshot of model performance. Evaluating the predictive performance on $(2^n-1)$ test sets would appear intractable for most applications, but we see through (\ref{eq:margcv}) and (\ref{eq:bml}) that it is computable as a single integral.

\subsection{Sensitivity to the prior and preparatory training}

The representation of the marginal likelihood as a cumulative cross-validation score (\ref{eq:margcv}) provides insight into the sensitivity to the prior. The last term in the right hand side of (\ref{eq:margcv}) involves no training data, $S_{CV}(y_{1:n}; n) = (1/n)\sum_{i=1}^n \log \int f_{\theta}(y_i) \, d\pi(\theta)$, which scores the model entirely on how well the analyst is able to specify the prior. In many situations, the analyst may not want this term to contribute to model evaluation. Moreover, there is tension between any desire to specify vague priors to safeguard their influence and the fact that diffuse priors can lead to an arbitrarily large and negative model score for real valued parameters from (\ref{eq:margcv}).  It may seem inappropriate to penalize a model based on the subjective ability to specify the prior, or to compare models using a score that includes contributions from predictions made using only a handful of training points even with informative priors. For example, we see that 10\% of terms contributing to the marginal likelihood come from out-of-sample predictions using, on average, less than 5\% of available training data. 
This is related to the start-up problem in prequential analysis \citep{Dawid1992}.
 
A natural and obvious solution is to begin evaluating the model performance after a preparatory phase, for example using 10\% or 50\% of the data as preparatory training prior to testing. This leads to a Bayesian cumulative leave-$P$-out cross-validation score defined as  
\begin{equation}\label{eq:CCV}
S_{CCV}(y_{1:n}; P) = \sum_{p=1}^{P} S_{CV} (y_{1:n} ; p) 
\end{equation}
with a  preparatory cross-validation score 
$
S_{PCV}(y_{1:n}; P)  =   \sum_{p=P+1}^{n} S_{CV} (y_{1:n} ; p),  
$
for $1 \leq P < n$. We suggest setting $P$ to leave out $0.9n$, $0.5n$ or $\max(0.9n, n-10d)$, where $d$ is the total number of model parameters, as reasonable default choices, but clearly this is situation specific. One may be interested in reporting both $S_{CCV}$ and $S_{PCV}$, as the latter can be regarded as an evaluation of the prior, but we suggest that only $S_{CCV}$ is used for model evaluation from the arguments above. Although full coherency is now lost, we still have coherency conditioned on a preparatory training set, where permutation of the data within the training and test sets does not affect the score, and so we can write (\ref{eq:CCV}) as
\begin{equation}\label{eq:CCV2}
S_{CCV}(y_{1:n}; P)  = \frac{1}{{n \choose P }} \sum_{t=1}^{n \choose P } \log p_\mathcal{M}\left(\tilde{y}^{(t)}_{1:P} \mid y^{(t)}_{1:n-P}\right).
\end{equation}
This equivalence is derived in the Supplementary Material in a similar fashion to Proposition \ref{prop2}. This has precisely the form of the the log geometric intrinsic Bayes factor of \citet{Berger1996} but motivated by a different route.  The intrinsic Bayes factor was developed in an objective Bayesian setting  \citep{Berger2001}, where improper priors cause indeterminacies in the evaluation of the marginal likelihood. The intrinsic Bayes factor remedies this with a partition of the data into $y_{1:l},y_{l+1:n}$, where $y_{1:l}$ is the minimum training sample used to convert an improper prior $\pi(\theta)$ into a proper prior  $\pi(\theta \mid y_{1:l})$. In contrast, we set $n-P$ to provide preparatory training and  $\pi(\theta)$ can be subjective. Moreover, in modern applications we often have $d \gg n$ where intrinsic Bayes factors cannot be applied in their original form.
\newpage
We can approximate (\ref{eq:CCV2}) through Monte Carlo where the training data sets ${y}^{(t)}_{1:n-P}$ are drawn uniformly at random, and for non-conjugate models the inner term must also be estimated, for example through
\begin{equation}\label{eq:MCCCV}
\hat{S}_{CCV}(y_{1:n}; P) = \frac{1}{T} \sum_{t=1}^{T} \log \left\{ \frac{1}{B}\sum_{b=1}^B f_{\theta_b^{(t)}}\left(\tilde{y}^{(t)}_{1:P}\right)\right\}
\end{equation}
where samples $\theta_b^{(t)} \sim \pi \left(\theta \mid y^{(t)}_{1:n-P}\right)$  are obtained via $T$ Markov chain Monte Carlo samplers. If we assume that the number of samples $B$ per chain is sufficiently large, then the variance of the estimate $\hat{S}_{CCV}$ is approximately of the form $\tau^2 / T$. However, fitting $T$ models may be costly, but we can run the chains in parallel. To avoid the need for $T$ Markov chain Monte Carlo chains in (\ref{eq:MCCCV}), we can instead take advantage of the fact that the partial posteriors for different training sets will be similar, and utilize importance sampling \citep{Bhattacharya2007, Vehtari2017} or sequential Monte Carlo \citep{Bornn2010} to estimate the posterior predictives for computational savings. We provide further details on efficient computation of (\ref{eq:MCCCV}) in the Supplementary Material.  

\section{Illustration for the normal linear model}

We illustrate the use of Bayesian cumulative cross-validation in a polynomial regression example, where the $r$th polynomial model is defined as
\begin{equation*}
f_{\theta}(y \mid x,r) =\mathcal{N}\{y;  \theta^\T \phi_r(x), \sigma^2 \}, \quad \phi_r(x) = \begin{bmatrix} 1 & x &\ldots & x^{r-1} &x^r \end{bmatrix}^\T.
\end{equation*}
We observe the data $\{y_{1:n},x_{1:n}\}$, and we place a fixed vague prior on the intercept term, $\theta_0 \sim \mathcal{N}(\theta_0; 0,100^2)$,  and $\theta_d \sim \mathcal{N}(\theta_d; 0,s^2)$ for $d \in \{1,\ldots,r\}$ on the remaining coefficients. 
In our example, we have $n=100$ and the true model is $r=1$, $\theta = \begin{bmatrix} 1 & 0.5\end{bmatrix}^\T$ with known $\sigma^2 = 1$. For our prior, we vary the value of $s^2 \in \left\{10^{-1},10^0,10^4 \right\}$ to investigate the impact of the prior tails. For each prior setting, we calculate  $\log p_\mathcal{M}(y_{1:n})$ and $S_{CCV}(y_{1:n};P)$ for  models $r \in \{0,1,2\}$. In this example, $\log p_\mathcal{M}(y_{1:n})$  is tractable, whereas $S_{CCV}$ requires a Monte Carlo average over tractable log posterior predictives. We report the mean over 10 runs of estimating $S_{CCV}$ with $T= 10^6$ random training/test splits. We calculate the Monte Carlo standard error over the 10 runs and report the maximum for each setting of $P$.

The results are shown in Table \ref{tbl:normal}, where $\hat{S}_{CCV}$ is normalized to the same scale as $\log p_r(y_{1:n})$. Under the strong prior $s^2 = 10^{-1}$ and the moderate prior $s^2 = 10^0$, the marginal likelihood correctly identifies the true model, but when we increase $s^2$ to $10^{4}$ it heavily over-penalizes the more complex models and prefers $r=0$. In fact, the magnitude of the marginal likelihood and the discrepancy just described can be made arbitrarily large by simply increasing $s^2$, which should be guarded against when a modeller has weak prior beliefs. This issue is not observed with $\hat{S}_{CCV}$ for the values of $P$ we consider. The vague prior does not impede the ability of $\hat{S}_{CCV}$ to correctly identify the true model $r=1$ and the scores are stable within each column of $P$. 

In the Supplementary Material, we present graphical tools for exploring the cumulative cross-validation and the effect of the choice of $P$ on $S_{CCV}$. We provide an additional example using probit regression on the Pima Indian data set.

\begin{table}[!h]
\center
\def~{\hphantom{-}}
\caption{ Log marginal likelihoods and cumulative cross-validation scores for normal linear model}{%
\begin{small}
\begin{sc}
\begin{tabular}{l|c||c|c|c|c}
$s^2$~~&Model  &$\log p_r(y_{1:n})$& \multicolumn{3}{c} {$\hat{S}_{CCV}(y_{1:n}; P) \times n/P$} \\[2pt]
 & $r$&  &$P=0.9n$& $P= 0.5n$ & $P = 0.1n$ \\[5pt]
 \hline\hline $10$\rlap{$^{-1}$} 		  &0& -158.82 &-153.80&-153.21& -153.06  \\
  							  &1& {-155.57} & {-150.39}&{-149.55}&{-149.27}\\
  							  &2& -156.12 &-150.94&-149.81& -149.38  \\[5pt]
$10$\rlap{$^{0}$}   		  &0& -158.82 &-153.80&-153.21& -153.06 \\
 							  &1& {-156.26} &{-150.77}&{-149.66}& {-149.34} \\
							  &2& -157.80&-151.90&-150.04& -149.50 \\[5pt]
$10$\rlap{$^{4}$}   		  &0& {-158.82 }&-153.80&-153.21& -153.06 \\
							  &1& -160.81 &{-150.91}&{-149.68}& {-149.35}  \\
 							  &2&  -166.93 &-152.30&-150.08&  -149.53\\[5pt]
\hline \hline \multicolumn{3}{r|} {Maximum standard error} &\hphantom{-000}0.002&\hphantom{-000}0.008& \hphantom{-000}0.023
\end{tabular}
\end{sc}
\end{small}}
\label{tbl:normal}
\end{table}

\newpage
\section{Discussion}

We have shown that for coherence, the unique scoring rule for Bayesian model evaluation in either $M$-open or $M$-closed is provided by the log posterior predictive probability, and that the marginal likelihood is equivalent to a cumulative cross-validation score over all training-test data partitions. The coherence flows from the fact that the scoring rule and the Bayesian update both use the same information, namely the likelihood function, which is appropriate as the alternative would be to learn and score under different criteria. If we are interested in an alternative loss function to the log likelihood, we advocate a general Bayesian update \citep{Bissiri2016,Lyddon2019} that targets the parameters minimising the expected loss, with models evaluated using the corresponding coherent cumulative cross-validation score. 

\section*{Acknowledgement}
The authors thank Lucian Chan, George Nicholson, the editor, an associate editor and two referees for their helpful comments. Fong was funded by The Alan Turing Institute. Holmes was supported by The Alan Turing Institute, the Health Data Research, U.K., the Li Ka Shing Foundation, the Medical Research Council, and the U.K. Engineering and Physical Sciences Research Council.

\bibliographystyle{apalike}
\bibliography{paper-ref}
\appendix
\numberwithin{equation}{subsection}
\numberwithin{figure}{subsection}
\numberwithin{table}{subsection}
\setcounter{figure}{0}
\newpage

\newpage
\section{Supplementary Material}

\subsection{Proof of Proposition 1}\label{proof1}
\begin{proof}
We look at the case where $\Theta = \left\{ 0,1\right\}$, so the prior $\pi_G(\theta)$ is parametrized by  $p \in [0,1]$ with $\pi_G(\theta = 0)=p$. We let $n=2$,  denoting the observables as $y_1,y_2$.  We further denote $l(0,y_1) = l_0$ and $l(1,y_1) = l_1$, and likewise $l(0,y_2) = h_0$ and $l(1,y_2) = h_1$. We write $p_1$ as the updated $\pi_G(\theta = 0 \mid y_1)$ obtained from the general Bayesian update (\ref{eq:genBayespost}).  The function $g(l)$ must then satisfy
\begin{equation}\label{eq:A1}
\begin{aligned}
\left\{g(l_0)p + g(l_1)(1-p) \right\}\left\{ g(h_0)p_1 + g(h_1)(1-p_1) \right\} \\ = \left\{ g(l_0+h_0)p + g(l_1 + h_1)(1-p) \right\}
\end{aligned}
\end{equation}
for all $0\leq p \leq 1$ and for all $l_0,l_1,h_0,h_1 \in [0,\infty)$. If we let $p =1$, then $p_1 = 1$, so this simplifies to
\begin{equation*}
g(l_0)g(h_0) = g(l_0 + h_0).
\end{equation*}
As $g$ is continuous and monotonically decreasing, to satisfy (\ref{eq:coherentscore}) it must take on the form
\begin{equation}\label{eq:A2}
g(l) = \exp(-\lambda l)
\end{equation}
for $\lambda \geq 0$.  
We now explicitly write out the form of $p_1$
\begin{equation}\label{eq:A3}
p_1 = \frac{\exp (-wl_0)p}{\exp (-wl_0)p + \exp (-wl_1)(1-p)} =  \frac{\exp (-wl_0)p}{Z_1}.
\end{equation}
If we plug (\ref{eq:A2}), (\ref{eq:A3}) into (\ref{eq:A1}), we obtain
\begin{equation*}
\begin{aligned}
\left\{\exp(-\lambda l_0)p+ \exp(-\lambda l_1)(1-p) \right\} \left\{ \exp(-\lambda h_0) \exp (-wl_0)p + \exp(-\lambda h_1) \exp (-wl_1)(1-p)\right\} \\= Z_1 \left[ \exp\{-\lambda (l_0+h_0)\}p + \exp\{-\lambda (l_1 + h_1)\} (1-p)\right].
\end{aligned}
\end{equation*}
Expanding, cancelling terms, and simplifying we obtain 
\begin{equation*}
\begin{aligned}
\exp(-\lambda l_1  - wl_0) \{\exp(-\lambda h_0) -\exp(-\lambda h_1)\}\\ = \exp(-\lambda l_0  - wl_1) \{\exp(-\lambda h_0) -\exp(-\lambda h_1)\}
\end{aligned}
\end{equation*}
and so we must have $\lambda = 0$ or $\lambda = w$, where only the latter solution is non-trivial. We have thus shown that for $n=2, |\Theta| = 2$, the unique non-trivial solution to (\ref{eq:coherentscore}) is
\begin{equation} \label{eq:A4}
g(l) = \exp(-w l).
\end{equation}
The remainder of the proof involves showing that this choice of $g$ satisfies (\ref{eq:coherentscore}) for all $n>0$ and all $\Theta$ and $\pi(\theta)$. Subbing (\ref{eq:A4}) into (\ref{eq:coherentscore}), we obtain
\begin{equation*}
\begin{aligned}
\prod_{i=1}^n \exp \left\{ s_G (y_i\mid y_{1:i-1}) \right\}&=   \prod_{i=1}^n  \int \exp\{-w l(\theta,y_i)\} \frac{\exp \left\{-w l(\theta,y_{1:i-1})\right\} d\pi_G(\theta)}{\int \exp \left\{-w l(\theta',y_{1:i-1}) \right\} d\pi_G(\theta')}   \\ 
&= \prod_{i=1}^n \frac{\int \exp \left\{-w l(\theta,y_{1:i}) \right\} d\pi_G(\theta)}{\int \exp \left\{-w l(\theta',y_{1:i-1}) \right\} d\pi_G(\theta')} \, \,
\\&=  \int \exp \left\{-w l(\theta,y_{1:n}) \right\} d\pi_G(\theta) 
\end{aligned} 
\end{equation*}
where for convenience we denote $l(\theta,y_{1:0})=0$. 
\end{proof}

\subsection{Proof of Proposition 2}
\begin{proof}
Consider the $(n! \times n)$ matrix $Z$ with elements $(Z)_{ti} = \log p_{\cal{M}}(y^{(t)}_{i} \mid y^{(t)}_{1:i-1})$, such that the $t$th row of $Z$ records the prequential sequence of log posterior predictives under the $t$th of $n!$ permutations of $y_{1:n}$. By the property of conditional probabilities, we have that the row sums of $Z$ are equal, $\sum_i (Z)_{ti} = \sum_i (Z)_{t'i}$ for all $t, t'$, and hence 
\begin{equation*} 
\label{eq:nfac}
 \begin{aligned}
\log  p_{\cal{M}}(y_{1:n})   = \frac{1}{n!} \sum_{t = 1}^{n!} \sum_{i=1}^n   (Z)_{ti}   =\sum_{i=1}^n   \frac{1}{n!} \sum_{t = 1}^{n!} (Z)_{ti}. 
 \end{aligned}
\end{equation*}

Within each column of $Z$, the values $(Z)_{ti}$ are invariant to the permutation of $y_{1:i-1}$ in the preceding $i-1$ columns under exchangeability. There are thus $n$-choose-$(i-1)$ distinct training sets and $n-i+1$ choices for $y_i$ given the training set. For each column $i \in \{1, \ldots, n\}$, we can then write
\begin{equation*} 
\begin{aligned}
\frac{1}{n!} \sum_{t = 1}^{n!} (Z)_{ti} &=  \frac{1}{{n \choose i-1}} \sum_{t=1}^{{n \choose i-1}} \frac{1}{n-i+1} \sum_{j=1}^{n-i+1} s\left(\tilde{y} _{j}^{(t)} \mid y_{1:i-1}^{(t)}\right) \\ &= S_{CV} (y_{1:n} ;n- i+1) ~~~~ 
\end{aligned}
\end{equation*}
where $s\left(\tilde{y} _{j}^{(t)} \mid y_{1:i-1}^{(t)}\right) = \log p_{\mathcal{M}}\left(\tilde{y} _{j}^{(t)} \mid y_{1:i-1}^{(t)}\right)$. We have the result for $p =n-i+1$.
\end{proof}

\subsection{Alternative proof of Proposition 2}\label{proof22}
To prove Proposition \ref{prop2}, we first begin by showing the following proposition.
\begin{proposition}\label{prop3}
For a preparatory cross-validation score, $S_{PCV}(y_{1:n}; P)$, defined as the sum of cross-validation terms from leave-$(P+1)$-out to leave-$n$-out,
\begin{equation*}
 S_{PCV}(y_{1:n}; P) =   \sum_{p=P+1}^{n} S_{CV} (y_{1:n} ; p),
\end{equation*}
we have the following equivalence relationship
\begin{equation}\label{eq:PCV_marg}
S_{PCV}(y_{1:n}; P) =  \frac{1}{{n \choose P}}  \sum_{t=1}^{{n \choose P }} \log p_{\mathcal{M}}\left(y_{1:n-P}^{(t)}\right)
\end{equation}
which states that $S_{PCV}$ is the average log marginal likelihood over all choices of the training set.
\end{proposition}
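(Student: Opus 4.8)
The plan is to recycle the permutation-matrix device already used in the first proof of Proposition~\ref{prop2}. Recall the $(n!\times n)$ matrix $Z$ with entries $(Z)_{ti}=\log p_{\mathcal{M}}(y^{(t)}_i\mid y^{(t)}_{1:i-1})$, one row per ordering of $y_{1:n}$, and the identity established there that the column average $\tfrac{1}{n!}\sum_{t=1}^{n!}(Z)_{ti}$ equals $S_{CV}(y_{1:n};\,n-i+1)$. As $i$ runs from $1$ to $n-P$, the index $p=n-i+1$ runs over $\{P+1,\dots,n\}$, so summing these column averages and exchanging the order of summation gives
\begin{equation*}
S_{PCV}(y_{1:n};P)=\sum_{p=P+1}^{n}S_{CV}(y_{1:n};p)=\sum_{i=1}^{n-P}\frac{1}{n!}\sum_{t=1}^{n!}(Z)_{ti}=\frac{1}{n!}\sum_{t=1}^{n!}\sum_{i=1}^{n-P}(Z)_{ti}.
\end{equation*}

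Next I would collapse the inner partial column sum. By the chain rule of probability applied to the first $n-P$ coordinates of the $t$th ordering, $\sum_{i=1}^{n-P}(Z)_{ti}=\sum_{i=1}^{n-P}\log p_{\mathcal{M}}(y^{(t)}_i\mid y^{(t)}_{1:i-1})=\log p_{\mathcal{M}}(y^{(t)}_{1:n-P})$. Hence $S_{PCV}(y_{1:n};P)$ is exactly the average of $\log p_{\mathcal{M}}(y^{(t)}_{1:n-P})$ over all $n!$ orderings of the full data set.

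The final step — the only one needing care — is the combinatorial reduction from an average over $n!$ orderings to an average over the ${n \choose P}$ distinct training sets. The key fact is that $\log p_{\mathcal{M}}(y^{(t)}_{1:n-P})$ depends only on the \emph{unordered} collection of points occupying the first $n-P$ positions, which is the permutation invariance of the log marginal likelihood expressed in (\ref{eq:p_fac}), and is entirely unaffected by the remaining $P$ coordinates and their order. Each of the ${n \choose P}$ subsets of size $n-P$ therefore arises from precisely $(n-P)!\,P!$ of the $n!$ orderings, so the average collapses to
\begin{equation*}
S_{PCV}(y_{1:n};P)=\frac{(n-P)!\,P!}{n!}\sum_{t=1}^{{n \choose P}}\log p_{\mathcal{M}}(y^{(t)}_{1:n-P})=\frac{1}{{n \choose P}}\sum_{t=1}^{{n \choose P}}\log p_{\mathcal{M}}(y^{(t)}_{1:n-P}),
\end{equation*}
where $t$ now indexes the distinct training sets, giving (\ref{eq:PCV_marg}). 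I expect the bookkeeping in this last display to be the main obstacle: one must be explicit that the invariance in (\ref{eq:p_fac}) simultaneously kills the ordering within the training block and the identity and ordering of the held-out block, so that the multiplicity is exactly $(n-P)!\,P!$. An alternative but more laborious route would be to expand the definition (\ref{eq:SCV}) of each $S_{CV}(y_{1:n};p)$ directly and rearrange the resulting posterior predictives, or to induct downward on $P$ from $P=n-1$ using $S_{PCV}(y_{1:n};P-1)-S_{PCV}(y_{1:n};P)=S_{CV}(y_{1:n};P)$; the matrix argument above is preferable as it reuses machinery already in place.
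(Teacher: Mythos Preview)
Your argument is correct, but it takes a genuinely different route from the paper's own proof of Proposition~\ref{prop3}. The paper argues by downward induction on $P$: the base case $P=n-1$ is immediate, and the inductive step writes $S_{PCV}(y_{1:n};P-1)=S_{PCV}(y_{1:n};P)+S_{CV}(y_{1:n};P)$, combines the two sums into $\tfrac{1}{P{n\choose P}}\sum_{t}\sum_{j}\log p_{\mathcal{M}}(\tilde y_j^{(t)},y_{1:n-P}^{(t)})$, and then counts that each unordered $(n-P+1)$-subset appears exactly $n-P+1$ times in this double sum, yielding the required $\tfrac{1}{{n\choose P-1}}$ average. In other words, the inductive route you mention in your final sentence as ``more laborious'' is precisely what the paper does.

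Your permutation-matrix argument is cleaner: it reuses the column-average identity from the first proof of Proposition~\ref{prop2}, collapses the partial row sum in one line via the chain rule, and performs a single multiplicity count $(n-P)!\,P!$ rather than an iterated one. The paper's induction, by contrast, is self-contained and does not invoke the $n!\times n$ matrix $Z$, at the cost of a slightly fiddlier combinatorial step at each stage. Both proofs ultimately rest on the same two ingredients --- permutation invariance of the marginal likelihood and a counting argument --- so the difference is one of packaging rather than substance.
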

\begin{proof}
 To show this, we use a proof by induction. We see that (\ref{eq:PCV_marg}) is trivially true for $P =n-1$, as this is simply $S_{CV}(y_{1:n};n)$. Assuming (\ref{eq:PCV_marg}) holds for some $1\leq P\leq n-1$, we have
 
\begin{equation*}
\begin{aligned}
S_{PCV}(y_{1:n}; P-1) &= S_{PCV}(y_{1:n}; P) + S_{CV}(y_{1:n};P)\\
&= \frac{1}{{n \choose P}}  \sum_{t=1}^{{n \choose P }} \log p_{\mathcal{M}}\left(y_{1:n-P}^{(t)}\right) +  \frac{1}{{n \choose P}}  \sum_{t=1}^{{n \choose P }} \frac{1}{P}  \sum_{j=1}^{P} \log p_{\mathcal{M}}\left(\tilde{y}_{j}^{(t)} \mid y^{(t)}_{1:n-P}\right) \\
&= \frac{1}{P {n \choose P}}  \sum_{t=1}^{{n \choose P }} \left\{P \log p_{\mathcal{M}}\left(y_{1:n-P}^{(t)}\right) +   \sum_{j=1}^{P} \log p_{\mathcal{M}}\left(\tilde{y}_{j}^{(t)} \mid y^{(t)}_{1:n-P}\right)  \right\}.
\end{aligned}
\end{equation*}
From the properties of conditional probability, we can write 
\begin{equation}\label{eq:SPCVproof}
\begin{aligned}
S_{PCV}(y_{1:n}; P-1) &=  \frac{1}{P {n \choose P}}  \sum_{t=1}^{{n \choose P }}  \sum_{j=1}^{P} \log p_{\mathcal{M}}\left(\tilde{y}_{j}^{(t)}, y^{(t)}_{1:n-P}\right)  .
\end{aligned}
\end{equation}
Again, the marginal likelihood is invariant to the permutation of the sequence under data exchangeability, so we have to consider the repetitions in the partitions $\tilde{y}_{j}^{(t)}, y^{(t)}_{1:n-P}$. For each of the $n$ choose $(n-P+1)$ unordered sequences $y^{(t')}_{1:n-P+1}$, there are $(n-P+1)$ partitions into $\tilde{y}_{j}^{(t)}, y_{1:n-P}^{(t)}$, so there are $n-P+1$ repetitions of each unordered  $y^{(t')}_{1:n-P+1}$ in (\ref{eq:SPCVproof}). We can thus write 
\begin{equation*}
\begin{aligned}
S_{PCV}(y_{1:n}; P-1) &=  \frac{(n-P+1)}{P {n \choose P}}  \sum_{t'=1}^{{n \choose P-1 }}   \log p_{\mathcal{M}}\left(y^{(t')}_{1:n-P+1}\right)  \\ &= \frac{1}{{n \choose P-1}}  \sum_{t'=1}^{{n \choose P-1 }}   \log p_{\mathcal{M}}\left(y^{(t')}_{1:n-P+1}\right) 
\end{aligned}
\end{equation*}
 and by induction we have  (\ref{eq:PCV_marg}). 
\end{proof} 
Proposition 2 then follows trivially by setting $P = 0$ in Proposition \ref{prop3}.

\subsection{Derivation of $S_{CCV}$ for Bayesian models}\label{SCCVderivation}
The following corollary follows easily from Propositions \ref{prop2} and \ref{prop3}.
\begin{corollary}
For the cumulative cross-validation score defined as
\begin{equation}\label{eq:CCV_supp}
S_{CCV}(y_{1:n}; P) = \sum_{p=1}^{P} S_{CV} (y_{1:n} ; p),
\end{equation}
we have the following equivalence relationship
\begin{equation}\label{eq:CCV_marg}
S_{CCV}(y_{1:n}; P)  = \frac{1}{{n \choose P }} \sum_{t=1}^{n \choose P } \log p_\mathcal{M}\left(\tilde{y}^{(t)}_{1:P} \mid y^{(t)}_{1:n-P}\right).
\end{equation}
\end{corollary}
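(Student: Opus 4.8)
The plan is to obtain the identity directly from Propositions~\ref{prop2} and \ref{prop3} together with the chain-rule factorisation of the marginal likelihood, avoiding a fresh induction. First I would express $S_{CCV}(y_{1:n};P)$ as the full cumulative sum minus its tail: since $S_{CCV}(y_{1:n};P) = \sum_{p=1}^{P} S_{CV}(y_{1:n};p)$ and $S_{PCV}(y_{1:n};P) = \sum_{p=P+1}^{n} S_{CV}(y_{1:n};p)$, their sum is $\sum_{p=1}^{n} S_{CV}(y_{1:n};p)$, which equals $\log p_{\mathcal{M}}(y_{1:n})$ by Proposition~\ref{prop2}. Hence
\[
S_{CCV}(y_{1:n};P) = \log p_{\mathcal{M}}(y_{1:n}) - S_{PCV}(y_{1:n};P).
\]

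Next I would insert the closed form from Proposition~\ref{prop3}, namely $S_{PCV}(y_{1:n};P) = {n \choose P}^{-1}\sum_{t} \log p_{\mathcal{M}}(y^{(t)}_{1:n-P})$, where $t$ ranges over the ${n \choose P}$ ways of choosing the size-$P$ held-out set (equivalently the size-$(n-P)$ training set). Because the marginal likelihood is invariant to permutations of $y_{1:n}$ under exchangeability, we may write $\log p_{\mathcal{M}}(y_{1:n}) = {n \choose P}^{-1}\sum_{t} \log p_{\mathcal{M}}(y_{1:n})$, and for each partition $y_{1:n} = \{\tilde{y}^{(t)}_{1:P}, y^{(t)}_{1:n-P}\}$ the chain rule gives $\log p_{\mathcal{M}}(y_{1:n}) = \log p_{\mathcal{M}}(\tilde{y}^{(t)}_{1:P} \mid y^{(t)}_{1:n-P}) + \log p_{\mathcal{M}}(y^{(t)}_{1:n-P})$. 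Subtracting the two averages term by term collapses to ${n \choose P}^{-1}\sum_{t} \log p_{\mathcal{M}}(\tilde{y}^{(t)}_{1:P} \mid y^{(t)}_{1:n-P})$, which is the identity (\ref{eq:CCV_marg}).

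There is no genuinely hard step: the content is entirely supplied by Propositions~\ref{prop2} and \ref{prop3}. The only point needing a little care is the index bookkeeping — one must check that the ${n \choose P}$ partitions appearing in Proposition~\ref{prop3} are in bijection with the ${n \choose P}$ test sets in the statement, so that the common weight ${n \choose P}^{-1}$ multiplies all three terms and the cancellation is exact rather than merely approximate. An equally valid but longer route would mirror the induction in the proof of Proposition~\ref{prop3}: verify the base case $P=1$, where $S_{CCV}(y_{1:n};1) = S_{CV}(y_{1:n};1)$ is ordinary leave-one-out, and pass from $P-1$ to $P$ via $S_{CCV}(y_{1:n};P) = S_{CCV}(y_{1:n};P-1) + S_{CV}(y_{1:n};P)$ together with the conditional-probability merging identity used there; the subtraction argument above is preferable for brevity.
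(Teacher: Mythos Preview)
Your proposal is correct and follows essentially the same route as the paper's own proof: write $S_{CCV}=\log p_{\mathcal{M}}(y_{1:n})-S_{PCV}$ via Proposition~\ref{prop2}, express $\log p_{\mathcal{M}}(y_{1:n})$ as the trivial average ${n\choose P}^{-1}\sum_t \log p_{\mathcal{M}}(\tilde{y}^{(t)}_{1:P},y^{(t)}_{1:n-P})$, subtract the Proposition~\ref{prop3} expression for $S_{PCV}$ term by term, and apply the chain rule. Your remark on the index bookkeeping is apt and the paper handles it identically.
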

\begin{proof}
We note that $\log p_{\mathcal{M}} (y_{1:n}) = S_{CCV}(y_{1:n}; P) + S_{PCV}(y_{1:n};P)$ from their definitions and Proposition \ref{prop2}. From the permutation invariance of the marginal likelihood, we can write 
\begin{equation} \label{eq:SCCVproof2}
\log p_{\mathcal{M}} (y_{1:n}) = \frac{1}{{n \choose P}} \sum_{t=1}^{n \choose P} \log p_{\mathcal{M}}\left(\tilde{y}^{(t)}_{1:P},y^{(t)}_{1:n-P}\right).  
\end{equation}
By subtracting (\ref{eq:PCV_marg}) in Proposition \ref{prop3} from (\ref{eq:SCCVproof2}) and regarding each term in the summation, we have
\begin{equation*}
\begin{aligned}
S_{CCV}(y_{1:n}; P) &= \frac{1}{{n \choose P}} \sum_{t=1}^{n \choose P} \left\{\log p_{\mathcal{M}}\left(\tilde{y}^{(t)}_{1:P},y^{(t)}_{1:n-P}\right)   - \log p_{\mathcal{M}}\left(y^{(t)}_{1:n-P}\right)   \right\} \\ &= \frac{1}{{n \choose P}} \sum_{t=1}^{n \choose P} \log p_{\mathcal{M}}\left(\tilde{y}^{(t)}_{1:P} \mid y^{(t)}_{1:n-P}\right)
\end{aligned}
\end{equation*}
\hphantom{0}
\end{proof}
\vspace{-10mm}

\subsection{Computing $S_{CCV}$}\label{compute_supp}
We  note that $\hat{S}_{CCV}$ in (\ref{eq:MCCCV}) is a biased estimate, and \cite{Rischard2018} provides unbiased estimators of $\log p_\mathcal{M}(\tilde{y}_{1:P} \mid y_{1:n-P})$ directly through unbiased Markov chain Monte Carlo and path sampling methods.

The arithmetic averaging over training/test splits $\hat{S}_{CCV}$ may also be inherently unstable, as demonstrated by the following example. Suppose that $y$ is a binary random variable which takes on either $0$ or $1$ with equal probability, and we are attempting to estimate $S_{CCV}(y_{1:n}; n/2)$. For large $n$, it is likely that approximately half of the values in $y_{1:n}$ are equal to 0 and the other half to 1. There will thus exist a permutation of the sequence $y_{1:n}$ such that almost all the first $n/2$ values are equal to 0, with the remaining almost all equal to 1. The model will then be certain that $y=0$ after observing the training set, and score the remaining $n/2$ points very poorly, giving a large negative log posterior predictive. This suggests that an arithmetic average may be unstable; the median or robust trimmed mean over permutations may be stabler alternatives.

The form in  (\ref{eq:CCV_marg}) relies on the conditional coherency of Bayesian updating and  scoring. Without this, $S_{CCV}$ still exists as defined in (\ref{eq:CCV_supp}), and can be directly estimated for example through
\begin{equation*}
\hat{S}_{CCV}(y_{1:n};P) = \frac{P}{T} \sum_{t=1}^T \frac{1}{p^{(t)}}  \sum_{j=1}^{p^{(t)}} s\left(\tilde{y}_{j}^{(t)} \mid y^{(t)}_{1:n-p^{(t)}}\right) 
\end{equation*}
where $p^{(t)} \sim \mathcal{U}\{1, P\}$ and the training set $y^{(t)}_{1:n-p^{(t)}}$ is sampled uniformly at random conditioned on $p^{(t)}$. This facilities alternative choices for the belief updating model and $s\left(\tilde{y} \mid y\right)$.

\subsection{Visualization of cumulative cross-validation}\label{vis_supp}

\begin{figure}[!h]
 \includegraphics[width=\textwidth]{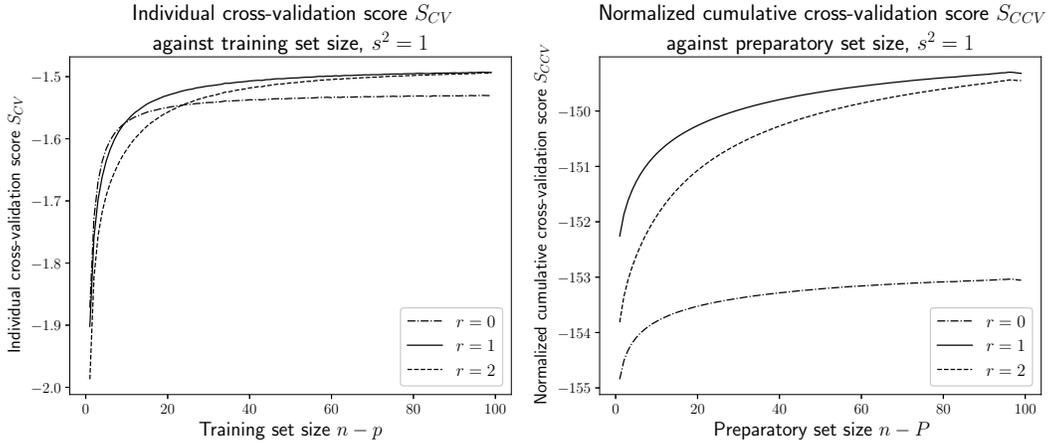}
\caption{Leave-$p$-out cross-validation score ${S}_{CV}(y_{1:n};p)$ against $n-p$ (left) and normalized cumulative cross-validation score ${S}_{CCV}(y_{1:n}; P)  \times n/P$ against $n-P$ (right) for $s^2 = 1$ and $p,P \in \{1,\ldots, 99\}$ in the polynomial regression example; the maximum standard error is 0.001 for ${S}_{CV}$  and 0.005 for $\hat{S}_{CCV}$.}
\label{fig:prep}
\end{figure}
\newpage
A visualization of the effects of the training/preparatory data size is shown in Figure \ref{fig:prep} for $s^2 = 1$ in the polynomial regression example. We omit $S_{CV}(y_{1:n};n)$ and $S_{CCV}(y_{1:n};n)$ for clarity of the plot, as both are significantly more negative than the other values. On the left we see that the individual cross-validation term $S_{CV}(y_{1:n};p)$ prefers the simplest $r=0$ model when the training set is very small as over-fitting is penalized, but as $n-p$ increases, the true $r=1$ model overtakes it. The $r=2$ model eventually overtakes the $r=0$ model too, and we see the discrepancy between $r=2$ and $r=1$ decrease as over-fitting is penalized less and less. This latter effect is demonstrative of how leave-one-out cross-validation under-penalizes complex models as argued in \citet{Shao1993}, and why a value of $P > 1$ should be preferred. On the right, we observe a similar effect for the cumulative cross-validation score $S_{CCV}$, but the discrepancy between $r=2$ and $r=1$ remains more noticeable for moderate $n-P$ as a cumulative sum of $S_{CV}$ terms is being taken.

\subsection{Illustration for the probit model}\label{probit_supp}
To demonstrate the cumulative cross-validation score in an intractable example, we carry out model selection in the Pima Indian benchmark model with a probit model. We observe binary random variables $y_{1:n}$ with associated $r$-dimensional covariates $x_{1:n}$, and the probit model is defined as
\begin{equation*}
f_\theta(y\mid x) =\left\{\Phi\left(\theta^\T \tilde{x}\right) \right\}^y \left\{1-\Phi\left(\theta^\T \tilde{x}\right) \right\}^{1-y}
\end{equation*}
where $\Phi$ is the standard normal cumulative distribution function and $\tilde{x} = \begin{bmatrix}1 & x^\T  \end{bmatrix}^\T$. As suggested in \citet{Marin2010}, we elicit a g-prior $\pi(\theta) = \mathcal{N}\left\{\theta; 0_{r+1}, g(X^\T X)^{-1}\right\} $ where $0_{r+1}$ is a $r+1$ vector of $0$s and $X$ is the $n$ by $r+1$ matrix with rows $\tilde{x}_i^\T$. 

The dataset consists of $n = 332$ data points and we consider $r=3$ covariates consisting of \texttt{glu}, \texttt{bp} and \texttt{ped}, which correspond to plasma glucose concentration from an oral glucose test, diastolic blood pressure and diabetes pedigree function respectively. We compare the full model $\mathcal{M}_0$: (\texttt{glu,bp,ped})  with  $\mathcal{M}_1$: (\texttt{glu,bp}) through $\log p_\mathcal{M} (y_{1:n})$ and $S_{CCV}(y_{1:n};P)$ to test for significance of \texttt{ped}. We standardize all covariates to have 0 mean and variance 1. We calculate $\log p_\mathcal{M} (y_{1:n})$ using importance sampling with a Gaussian proposal with $10^3$ samples. The proposal mean is set to the maximum likelihood estimate of $\theta$ and proposal covariance to the estimated covariance matrix of the maximum likelihood estimate as suggested in \cite{Marin2010}. For ${S}_{CCV}(y_{1:n};P)$, we estimate each posterior predictive in (\ref{eq:MCCCV}) with the same importance sampling scheme where we temper the proposal such that its covariance matrix is divided by $(n-P)/n$. We also use $10^3$ proposal samples and average over $T=10^5$ random train/test splits. We carry out 10 runs of each and report the mean and maximum standard error as before. 

We see in Table \ref{tbl:probit} that for $g=n$, the simpler model with \texttt{ped} omitted performs worse for both scores, and there is thus strong evidence for \texttt{ped}. However, when we set $g=10n$, we see that comparing models via the marginal likelihood suggests that \texttt{ped} is no longer significant, while the cumulative cross-validation score changes little with this increased variance of the prior. As a sanity check, we run a Gibbs sampler targeting $\pi(\theta \mid y_{1:n},x_{1:n})$ for the two prior settings within the full model $\mathcal{M}_0$, and plot the marginal posterior of $\theta_{\mathrm{ped}}$ in Figure \ref{fig:pimapost}. For reference, the posterior means of $\theta_{\mathrm{glu}},\theta_{\mathrm{bp}}$ are $0.70$ and $0.12$ respectively. The posteriors of $\theta_{\mathrm{ped}}$ are indistinguishable for the two prior settings, with a significant mean for $\theta_{\mathrm{ped}}$. This agrees well with the cumulative cross-validation score $\hat{S}_{CCV}$ which is clearly robust to vague priors.

\begin{table}[!h]
\centering
\def~{\hphantom{-}}
\caption{Log marginal likelihoods and cumulative cross-validation score for probit model}{%
\begin{small}
\begin{sc}
\begin{tabular}{l|c||c|c}
 $g$~~&Model  &$\log p_{\mathcal{M}}(y_{1:n})$&$\hat{S}_{CCV}(y_{1:n}; P) \times n/P$  \\[2pt]
 &    & &$P= 0.9n$   \\[5pt]
 \hline \hline
 $n$ 		  &(\texttt{glu,bp,ped})& {-168.93}&{-165.87}  \\
  							  &(\texttt{glu,bp})& -170.00&-167.37 \\[5pt]
$10n$   		  &(\texttt{glu,bp,ped})& -173.10&{-166.28} \\ &(\texttt{glu,bp})& {-173.05}&-167.64 \\[5pt] \hline \multicolumn{2}{r||} {Maximum standard error} & \hphantom{-000}0.004& \hphantom{-00}0.02
\end{tabular}
\end{sc}
\end{small}}
\label{tbl:probit}
\end{table}
\begin{figure}[!ht]
\begin{center}
\includegraphics[width=0.5\textwidth]{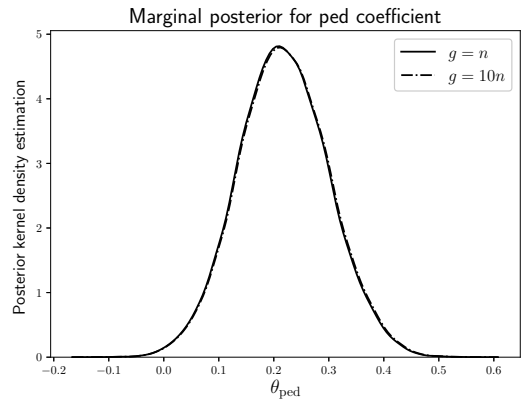}
\caption{Marginal posterior density plots for  $\theta_{\text{ped}}$  for different prior scalings $g$.}
\label{fig:pimapost}
\end{center}
\end{figure}

\end{document}